\documentclass[a4paper]{article}

\usepackage{libertinus}

\usepackage{amsmath,amsthm,amssymb,latexsym}

\newtheorem{theorem}{Theorem}[section]
\newtheorem{proposition}[theorem]{Proposition}
\newtheorem{lemma}[theorem]{Lemma}

\newtheorem{remark}[theorem]{Remark}
\newtheorem{definition}[theorem]{Definition}
\newtheorem{example}[theorem]{Example}
\newtheorem{assumption}[theorem]{Assumption}

\begin{document}

\title{A note on markets with semi-static\\ trading strategies\thanks{The author gratefully 
acknowledges the support of 
the National Research, Development and Innovation Office (NKFIH) through grant K 143529.}}

\author{Mikl\'os R\'asonyi\thanks{HUN-REN Alfr\'ed R\'enyi Institute of Mathematics, Budapest, 
Hungary; rasonyi@renyi.hu}}

\date{\today}

\maketitle
{}



\begin{abstract}
We investigate arbitrage in a discrete-time financial market model where, in addition to finitely many dynamically
traded assets, there are also static options to choose from. We introduce the concept of \emph{small cones}
of random variables and present a sufficient condition for the attainable positions
in the market to be closed in probability. 

A fundamental
theorem of asset pricing is shown in the present context. Utility maximization
will also be considered. We will provide economically meaningful examples of infinite 
dimensional small cones to demonstrate the pertinence of our approach.
\end{abstract}

\noindent\textbf{Keywords:} arbitrage; static options; fundamental theorem of asset pricing; utility maximization

\noindent\textbf{JEL Classification:} D52, G11, G12

\section{Introduction}

In \cite{non-closed} a discrete-time model of a financial market was considered where, in addition
to dynamically traded assets one could invest into static options as well: the investor
could buy these at time $0$ and receive their payoff at time $T$. See the bibliography of \cite{non-closed} for
further literature about such models.

The study \cite{non-closed} drew attention to the mathematical difficulties associated with such models.
Investigations in arbitrage theory are concerned with certain infinite-di\-men\-sio\-nal cones in topological
vector spaces: the cone of attainable positions in the given market.  
It is pivotal for this theory that the cone of superhedgeable claims should form a \emph{closed} set.

In this respect, \cite{non-closed} conveyed an intimidating message: if the set of
available options coincides with all contingent claims that have finite price
under a given pricing measure then the set of superhedgeable claims may \emph{fail} to
be closed, see Remark \ref{egy} below for a detailed discussion. 

In this note we point out that, if the cone of available options is \emph{small} in
a technical sense to be defined, closedness does hold in numerous economically meaningful settings. 

In Section \ref{cones} we investigate the closedness of cones of random variables and related
technicalities. In Section \ref{dynstat} we recall a market with finitely many dynamically
traded assets and static options. Absence of arbitrage is characterized in Theorem \ref{nao} under the hypothesis that
the available set of options is a finite sum of small cones. Section \ref{utill} treats utility maximization in the same model.
Section \ref{exaudi} presents examples of small cones. 
Some conclusions are drawn in Remark \ref{conil}.
  
\section{Cones of random variables}\label{cones}

For $x\in\mathbb{R}$, the notation
$x^{+}$/$x^{-}$ refers to positive/negative parts.
Let $(\Omega,\mathcal{F},P)$ be the probability space we are working on, expectation is denoted by $E[\cdot]$.  
We denote by $L^{0}$ the vector space of
all (almost sure equivalence classes of) $\mathbb{R}$-valued random variables, equipped with the topology of convergence 
in probability. 
$L^{\infty}$ will stand for the set of almost surely bounded elements in $L^{0}$,
equipped with the essential supremum norm $||\cdot||_{\infty}$.
$L_{+}^{0}$ (resp. $L^{\infty}_{+}$) refers to the set of non-negative elements in $L^{0}$ (resp. $L^{\infty}$). 
For $H\subset L^{0}$, its closure
will be denoted $\overline{H}$.  For a probability $P'\ll P$ we denote by
$E_{P'}[\cdot]$ the corresponding expectation and by $L^{1}(P')$ the usual Banach-space of $P'$-integrable random variables.

A subset $C\subset L^{0}$ is called a \emph{cone} if, for all $\xi_{1},\xi_{2}\in C$ and $\lambda_{1},\lambda_{2}\geq 0$,
also $\lambda_{1}\xi_{1}+\lambda_{2}\xi_{2}\in C$. 
For a subset $H\subset L^{0}$, $\mathcal{C}(H)$ denotes
the cone generated by $H$. For $H_{1},H_{2}\subset L^{0}$
we denote $H_{1}+H_{2}:=\{h_{1}+h_{2}:h_{1}\in H_{1},h_{2}\in H_{2}\}$, $H_{1}-H_{2}$ is then self-explanatory.
A subset $H\subset L^{0}$ is called \emph{bounded} if
$$
\sup_{\xi\in H}P(|\xi|\geq n)\to 0\mbox{ as }n\to\infty.
$$
We recall two beautiful theorems that
are crucial for the present paper. 

\begin{lemma}\label{bou} If $H\subset L^{0}$ is convex and bounded 
then there is $R\sim P$
with $dR/dP\in L^{\infty}$ and $\sup_{h\in H}E_{R}[|h|]<\infty$. 
\end{lemma}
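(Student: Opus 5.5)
This is the classical Yan/Kardaras--Žitković type result: for a convex, bounded-in-probability set $H$ there is an equivalent measure with bounded density under which $H$ is bounded in $L^1$. The plan is to replace $H$ by its absolute values, dualize against densities via a minimax theorem, and then glue local bounds together by an exhaustion argument.

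\textbf{Step 1 (reduction).} Since $|\cdot|$ is convex and $H$ is convex, the convex hull $K$ of $\{|h|:h\in H\}$ consists of random variables dominated by $|h|$ for some $h\in H$. Indeed, for $h=\sum\lambda_i h_i\in H$ we have $|h|\le\sum\lambda_i|h_i|$. However, this goes the wrong direction for boundedness. Instead we use $\sum\lambda_i|h_i|\le \sum \lambda_i (h_i^++h_i^-)$ and check that this convex combination is still bounded in probability. It is enough to show that $\operatorname{conv}\{h^+:h\in H\}$ and $\operatorname{conv}\{h^-:h\in H\}$ are bounded. This is not automatic, so I would sidestep it and work directly with the truncations $|h|\wedge n$, using a dual formulation.

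\textbf{Step 2 (local version).} First prove the following claim. For each $A\in\mathcal F$ with $P(A)>0$ there exist $B\subset A$ with $P(B)>0$ and a constant $c$ with $E[|h|1_B]\le c$ for all $h\in H$. Choose $n$ with $\sup_h P(|h|\ge n)<P(A)/2$. Consider the convex, $\sigma(L^\infty,L^1)$-compact set $\mathcal D=\{f\in L^\infty: 0\le f\le 1_A,\ E f\ge P(A)/2\}$. By a minimax theorem, applied with $f\mapsto E[f(|h|\wedge m)]$ affine in $f$ and using convexity in $h$ via the convexity of $|\cdot|$, we aim for a single $f\in\mathcal D$ with $\sup_h E[f|h|]\le c$. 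Here is where convexity enters: the functional $h\mapsto E[f|h|]$ is convex, and the minimax lemma (Kneser--Fan) requires concave--convex structure. Instead, one uses the standard Yan-style argument. For each $h$, pick $f_h=1_{A\cap\{|h|<n\}}$, so that $E[f_h|h|]\le n$. The minimax theorem then gives $\min_f\sup_{h\in\operatorname{conv}}$; for a convex combination $g=\sum\lambda_i h_i$, boundedness of $H$ controls $P(|g|\ge n)$, which is exactly what makes the inf-sup finite. Finally set $B=\{f\ge \delta\}$ for small $\delta>0$; this has positive measure because $Ef\ge P(A)/2$.

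\textbf{Step 3 (exhaustion and gluing).} By a maximality argument (take the essential supremum of sets $B$ admitting some constant), obtain countably many disjoint $B_k$ with constants $c_k$ whose union has full measure. If the union did not have full measure, Step 2 applied to its complement would give a contradiction. Define $dR/dP=\kappa\sum_k 2^{-k}(1+c_k)^{-1}1_{B_k}$ with a normalizing constant $\kappa$. This density is positive a.s., bounded by $\kappa$, and satisfies $E_R|h|\le\kappa\sum 2^{-k}<\infty$ uniformly in $h\in H$.

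The main obstacle is Step 2: producing one density working simultaneously for all $h$. This requires the minimax/Hahn--Banach separation on the weak$^*$-compact set $\mathcal D$, with the convexity of $H$ used to make the payoff functional convex in the $h$-variable.
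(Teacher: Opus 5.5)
Your proposal has a genuine gap in Step~2, and the gap cannot be closed, because the statement is false for signed $H$.

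\textbf{Where Step~2 breaks.} A minimax or Hahn--Banach argument that produces a single $f\in\mathcal D$ good for all $h$ needs a convex family on the $h$-side. In effect you must test $\mathcal D$ against $\mathrm{conv}\{|h|\wedge m: h\in H\}$. A finite $\inf_f\sup$ then requires $\mathrm{conv}\{|h|:h\in H\}$ to be bounded in $L^0$.

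Boundedness of $H$ only controls $|\sum_i\lambda_i h_i|$, which lies \emph{below} $\sum_i\lambda_i|h_i|$. This is exactly the difficulty you flagged in Step~1 and then used implicitly. Convexity of $h\mapsto E[f|h|]$ is also the wrong direction for Ky Fan/Sion, which need concavity in the supremum variable. Your choice $f_h=1_{A\cap\{|h|<n\}}$ handles each $h$ separately but says nothing about $\sum_i\lambda_i|h_i|$.

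\textbf{Counterexample.} Let $\xi_i$ be i.i.d.\ standard Cauchy and $H=\{\sum_{i\le k}a_i\xi_i: k\in\mathbb N,\ \sum_i|a_i|\le 1\}$; the paper's own $\mathrm{conv}\{\xi_i\}$ from its Cauchy example works equally well. By stability, each element of $H$ is Cauchy with scale at most $1$, so $H$ is convex, bounded and contains $0$. Since $E|\xi_1|=\infty$, the strong law gives $\frac1N\sum_{i\le N}|\xi_i|\to\infty$ a.s. Hence for every $B$ with $P(B)>0$, Fatou's lemma yields $\sup_iE[|\xi_i|1_B]\ge\liminf_N E[1_B\frac1N\sum_{i\le N}|\xi_i|]=\infty$. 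So your local claim already fails for $A=\Omega$, and no $R\sim P$ can satisfy $\sup_{h\in H}E_R[|h|]<\infty$.

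\textbf{What your scheme does prove.} Local separation plus the exhaustion of Step~3 (which is fine) proves the lemma for convex bounded $H\subset L^0_+$. There one separates $c1_A$, for $c$ large, from the $L^1$-closure of $\{g\in L^\infty_+: g\le h\text{ for some }h\in H\}-L^1_+$. For signed $H\subset L^1$ with $0\in H$, the same separation, applied to $\overline{H-L^1_+}$, yields only the one-sided bound $\sup_{h\in H}E_R[h]<\infty$.

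\textbf{The paper's proof has the same defect.} It cites Yan for the case $0\in H$ and then translates by $h_0$. The translation is harmless, but the cited base case, read with $E_R[|h|]$, is refuted by the example above, which contains $0$. The lemma therefore needs $H\subset L^0_+$ or the weaker one-sided conclusion.
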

\begin{proof} In \cite{yan} this is proved under the supplementary condition $0\in H$. To see the general case,
pick some $h_{0}\in H$ and consider $\tilde{H}:=\{h-h_{0}:h\in H\}$ which is also convex, bounded and contains $0$.
By the result of \cite{yan}, there is $\tilde{R}$ with $d\tilde{R}/dP$ bounded and 
$\sup_{h\in \tilde{H}}E_{\tilde{R}}[|h|]<\infty$. Now define $dR/d\tilde{R}:=e^{-|h_0|}/E[e^{-|h_0|}]\leq M_{0}<\infty$.{}
Clearly, $dR/dP$ is bounded and
$$
\sup_{h\in H}E_{R}[|h|]\leq \sup_{h\in \tilde{H}}E_{R}[|h|+|h_{0}|]\leq M_{0}\sup_{h\in \tilde{H}}E_{\tilde{R}}[|h|]+
E_{R}[|h_{0}|]<\infty,
$$
by the choice of $\tilde{R}$ and $R$.  
\end{proof}

\begin{lemma}\label{komlos} Let $H\subset L^{0}$ be bounded and convex. Let $\xi_{n}\in H$, $n\in\mathbb{N}$.
Then there exists a subsequence $n_{k}$ and $\xi_{*}\in L^{0}$ such that
$$
\frac{\xi_{n_{1}}+\ldots+\xi_{n_{k}}}{k}\to \xi_{*}
$$
almost surely. 
\end{lemma}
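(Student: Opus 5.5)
The plan is to reduce Lemma \ref{komlos} to the classical Komlós theorem by a change of measure, using Lemma \ref{bou}. Since $H$ is convex and bounded in $L^{0}$, Lemma \ref{bou} gives a probability $R\sim P$ with $dR/dP\in L^{\infty}$ and
$$
M:=\sup_{h\in H}E_{R}[|h|]<\infty .
$$
In particular, the sequence $(\xi_{n})$ is bounded in $L^{1}(R)$. Note that convexity of $H$ is used only at this point, to obtain $R$; after that we work purely with an $L^{1}$-bounded sequence.

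Next we apply Komlós' theorem on $(\Omega,\mathcal{F},R)$. It states that an $L^{1}(R)$-bounded sequence has a subsequence $(\xi_{n_{k}})$ and a limit $\xi_{*}\in L^{1}(R)$ such that the Cesàro means of that subsequence, and of every further subsequence, converge to $\xi_{*}$ $R$-almost surely. We only need the statement for the subsequence itself:
$$
\frac{\xi_{n_{1}}+\ldots+\xi_{n_{k}}}{k}\to \xi_{*}\quad R\mbox{-a.s.}
$$
Since $R\sim P$, the two measures have the same null sets, so the convergence also holds $P$-almost surely. Moreover, $\xi_{*}$ is finite $R$-a.s., by Fatou's lemma, which gives $E_{R}[|\xi_{*}|]\leq M$. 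Hence $\xi_{*}$ defines an element of $L^{0}$, which is the claim.

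The main obstacle is really Komlós' theorem itself, which I would cite rather than reprove. If a self-contained argument were wanted, the standard route is as follows. Truncate $\xi_{n}$ at levels growing with $n$. Use weak compactness in $L^{2}$ of the truncated variables along a diagonal subsequence. Control the truncation error by $\sum_{k}R(|\xi_{n_{k}}|>k)<\infty$ together with the Borel--Cantelli lemma. Finally, obtain a.s. convergence of the Cesàro means of the centered truncated parts via an orthogonality/martingale-type $L^{2}$ estimate and Kronecker's lemma. The only genuinely new ingredient relative to the classical statement is the passage from $L^{0}$-boundedness to $L^{1}$-boundedness under an equivalent measure, and that step is exactly Lemma \ref{bou}.
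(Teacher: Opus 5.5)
Your proof is correct and matches the paper's argument. Both use Lemma \ref{bou} to pass to an equivalent measure $R$ under which $H$ is bounded in $L^{1}(R)$, apply Komlós' theorem in $L^{1}(R)$, and transfer the almost sure convergence back to $P$ via $R\sim P$.
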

\begin{proof} Apply Lemma \ref{bou}. Since
$H$ is bounded in $L^{1}(R)$, Theorem 1 of \cite{komloss} implies the statement.
\end{proof}

We introduce a rather technical concept which, however, will prove to be mathematically convenient.

\begin{definition}\label{deff} A set $L\subset L^{0}$ is called a \emph{small cone} if there exists 
a convex, closed and bounded
set $H$ with $0\notin H$ such that  $L=\mathcal{C}(H)$. Such an $H$ is called a \emph{generator} of $L$.
\end{definition}

It is a well-known fact that, in a finite-dimensional Euclidean space, a cone generated by a closed, bounded
set away from the origin is closed (see Corollary 9.6.1 of \cite{rockafellar}). In the current, 
infinite dimensional setting, small cones have similar, favourable closedness properties.

\begin{lemma}\label{comp}
Every small cone is closed. If $L$ is a small cone and $C\subset L^{0}$ is an arbitrary closed cone with $C\cap (-L)=\{0\}$
then $C+L$ is also closed.	
\end{lemma}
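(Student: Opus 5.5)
We prove the second assertion; the first follows from it with $C=\{0\}$, since $\{0\}\cap(-L)=\{0\}$ trivially.

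Let $H$ be a generator of $L$. Every element of $C+L$ has the form $c+\lambda h$ with $c\in C$, $\lambda\geq 0$, $h\in H$. Take a sequence $\zeta_n=c_n+\lambda_n h_n\to\zeta$ in probability; we must show $\zeta\in C+L$. Passing to a subsequence, we may assume $\zeta_n\to\zeta$ almost surely. The proof splits according to whether $(\lambda_n)$ is bounded, and the main obstacle is ruling out $\lambda_n\to\infty$.

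\emph{Case 1: $\sup_n\lambda_n<\infty$.} Pass to a subsequence with $\lambda_n\to\lambda\geq 0$. The set $H$ is convex and bounded, so Lemma~\ref{komlos} gives a further subsequence $n_k$ with
\[
\frac{1}{K}\sum_{k\le K} h_{n_k}\to h_*\quad\text{almost surely}.
\]
Since $H$ is closed and convex, $h_*\in H$. We cannot average $c_n$ directly, because $C$ need not be bounded. Instead we use the identity
\[
\frac{1}{K}\sum_{k\le K}c_{n_k}=\frac{1}{K}\sum_{k\le K}\zeta_{n_k}-\frac{1}{K}\sum_{k\le K}\lambda_{n_k}h_{n_k}.
\]
The first average tends to $\zeta$ almost surely. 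Write $\lambda_{n_k}h_{n_k}=\lambda h_{n_k}+(\lambda_{n_k}-\lambda)h_{n_k}$. Boundedness of $H$ in $L^0$ gives $(\lambda_{n_k}-\lambda)h_{n_k}\to0$ in probability, so Cesàro averages of these terms also tend to $0$ in probability. Hence the second average tends to $\lambda h_*$ in probability. The left-hand side lies in $C$ because $C$ is a convex cone, and $C$ is closed, so $\zeta-\lambda h_*\in C$. Therefore $\zeta\in C+L$.

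\emph{Case 2: $\lambda_n\to\infty$ along a subsequence.} Divide by $\lambda_n$:
\[
\frac{c_n}{\lambda_n}+h_n=\frac{\zeta_n}{\lambda_n}\to 0\quad\text{in probability}.
\]
Apply the same Komlós averaging to $(h_n)$ to obtain $\tfrac1K\sum_{k\le K}h_{n_k}\to h_*\in H$ almost surely. The Cesàro averages of $\zeta_{n_k}/\lambda_{n_k}$ also go to $0$ in probability, because the terms converge to $0$ in probability. Consequently
\[
\frac1K\sum_{k\le K}\frac{c_{n_k}}{\lambda_{n_k}}\to -h_*.
\]
These averages lie in $C$ and $C$ is closed, so $-h_*\in C$. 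Since also $h_*\in L$, this gives $-h_*\in C\cap(-L)=\{0\}$, hence $h_*=0$. This contradicts $0\notin H$, so Case 2 cannot occur.

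The key points are therefore the following.
\begin{enumerate}
\item Komlós averaging of the bounded $h$-components, where closedness of $H$ is used to keep the limit in $H$.
\item Transferring the convergence to the unbounded $C$-components through the identity, which uses only closedness and convexity of $C$.
\item The step requiring care: Cesàro averages of a sequence converging in probability converge in probability. This holds here because each averaged family is bounded, or tends to $0$, in $L^0$. If needed, we would work in $L^1(R)$ with $R$ from Lemma~\ref{bou}, replacing a variable $\xi$ by $\min(|\xi|,1)$.
\end{enumerate}
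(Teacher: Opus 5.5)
Your proof has the same structure as the paper's. You split into two cases according to whether $\sup_n\lambda_n<\infty$, apply Koml\'os averaging to the $h$-components, use closedness of $C$ on the averaged $c$-components, and derive the contradiction $-h_*\in C\cap(-L)=\{0\}$ when $\lambda_n\to\infty$. Passing first to an almost surely convergent subsequence is a good move. It is what legitimately gives $\frac{1}{K}\sum_{k\le K}\zeta_{n_k}\to\zeta$, which the paper uses tacitly. In Case 2, the correct reason why $\frac{1}{K}\sum_{k\le K}\zeta_{n_k}/\lambda_{n_k}\to 0$ is that these terms tend to $0$ almost surely, not merely in probability.

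The gap is in the one delicate step of Case 1, namely $\frac{1}{K}\sum_{k\le K}(\lambda_{n_k}-\lambda)h_{n_k}\to 0$ in probability. You derive it from the principle that Ces\`aro means of an $L^0$-bounded sequence tending to $0$ in probability also tend to $0$ in probability. That principle is false. Take independent events $A_n$ with $P(A_n)=1/n$ and set $X_n:=2^n1_{A_n}$. The family is bounded in $L^0$ and $X_n\to 0$ in probability. Yet with probability at least $1/2$ some $A_k$ with $n/2<k\le n$ occurs, and then $\frac{1}{n}\sum_{k\le n}X_k\geq 2^{n/2}/n$.

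Your fallback does not fix this as written, because truncating to $\min(|\xi|,1)$ cannot be pushed through averages. The quantity $E_R[\min(|\frac{1}{K}\sum_k\xi_k|,1)]$ is not controlled by $\frac{1}{K}\sum_k E_R[\min(|\xi_k|,1)]$, as the same example shows. What makes the step true is that all the $h_{n_k}$ lie in one convex bounded set. There are two ways to use this.
\begin{enumerate}
\item Use Lemma \ref{bou} without truncation, as the paper does. With $S:=\sup_{h\in H}E_R[|h|]<\infty$ you get $E_R[|\frac{1}{K}\sum_{k\le K}(\lambda_{n_k}-\lambda)h_{n_k}|]\leq \frac{S}{K}\sum_{k\le K}|\lambda_{n_k}-\lambda|\to 0$. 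Convergence in $R$-probability is the same as convergence in $P$-probability since $R\sim P$.
\item Use convexity directly. Write $\frac{1}{K}\sum_{k\le K}(\lambda_{n_k}-\lambda)^{\pm}h_{n_k}=\beta_K^{\pm}g_K^{\pm}$, where $\beta_K^{\pm}:=\frac{1}{K}\sum_{k\le K}(\lambda_{n_k}-\lambda)^{\pm}\to 0$ and $g_K^{\pm}\in H$ is a convex combination (when $\beta_K^{\pm}=0$ the term vanishes). Boundedness of $H$ in $L^0$ then gives the claim without invoking Lemma \ref{bou}.
\end{enumerate}
With either repair your proof is complete.
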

\begin{proof} The first statement follows from the second with the choice $C=\{0\}$ so we only prove the latter.
Let $H$ be a generator of $L$ as in Definition \ref{deff} and let $R$ be as given by Lemma \ref{bou} for this choice of $H$,
set $S:=\sup_{h\in H}E_{R}[|h|]<\infty$. 

Let $c_{n}+\alpha_{n}h_{n}\to \zeta$ in probability as $n\to\infty$, where $c_{n}\in C$, $h_{n}\in H$, $\alpha_{n}\geq 0$. If
$\sup_{n}\alpha_{n}<\infty$ then
there is a subsequence $n_{k}$ such that both $\alpha_{n_{k}}\to\alpha_{*}\geq 0$, $k\to\infty${}
and $\sum_{j=1}^{k}h_{n_{j}}/k\to h_{*}\in H$ hold almost surely, by Lemma \ref{komlos}. Then for every $\varepsilon${}
there is $N(\varepsilon)\geq 1$ such that for $j\geq N(\varepsilon)$, $|\alpha_{n_{j}}-\alpha_{*}|\leq \varepsilon$.
For $k\geq N(\varepsilon)+1$,
\begin{eqnarray*}& &
\left|\frac{1}{k}\sum_{j=1}^{k}\alpha_{n_{j}}h_{n_{j}}-\alpha_{*}h_{*}\right|\\ &\leq&{}
\left|\frac{1}{k}\sum_{j=1}^{k}\alpha_{n_{j}}h_{n_{j}}-\frac{1}{k}\sum_{j=1}^{k}\alpha_{*}h_{n_{j}}\right|+{}
\left|\frac{1}{k}\sum_{j=1}^{k}\alpha_{*}h_{n_{j}}-\alpha_{*}h_{*}\right|\\
&\leq& \left|\frac{1}{k}\sum_{j=N(\varepsilon)+1}^{k}\alpha_{n_{j}}h_{n_{j}}-\frac{1}{k}\sum_{j=N(\varepsilon)+1}^{k}\alpha_{*}h_{n_{j}}\right|
+\frac{1}{k}\left|\sum_{j=1}^{N(\varepsilon)}\alpha_{n_{j}}h_{n_{j}}\right|\\
&+& \frac{\alpha_{*}}{k}\left|\sum_{j=1}^{N(\varepsilon)}h_{n_{j}}\right| +
\alpha_{*}
\left|\frac{1}{k}\sum_{j=1}^{k}h_{n_{j}}-h_{*}\right|.
\end{eqnarray*}
For fixed $\varepsilon$, the third and fourth terms tend to $0$ in probability as $k\to\infty$. 
As to the second term,
$$
E_{R}\left[\left|\sum_{j=1}^{N(\varepsilon)}\alpha_{n_{j}}h_{n_{j}}\right|\right]/k\leq N(\varepsilon)S\sup_{n}\alpha_{n}/k,
$$
so this term also tends to $0$ in probability when $k\to\infty$ (keeping $\varepsilon$ fixed).
As to the first term,
\begin{eqnarray*}
& & E_{R}\left[\left|\sum_{j=N(\varepsilon)+1}^{k}\alpha_{n_{j}}h_{n_{j}}-\sum_{j=N(\varepsilon)+1}^{k}\alpha_{*}h_{n_{j}}\right|\right]\\
&\leq& \sum_{j=N(\varepsilon)+1}^{k} |\alpha_{n_{j}}-\alpha_{*}|E_{R}[|h_{n_{j}}|]\leq \varepsilon S k	
\end{eqnarray*}
so by Markov's inequality, for every $\eta>0$,
\begin{eqnarray*} 
P\left(\left|\frac{1}{k}\sum_{j=N(\varepsilon)+1}^{k}\alpha_{n_{j}}h_{n_{j}}-\frac{1}{k}\sum_{j=N(\varepsilon)+1}^{k}
\alpha_{*}h_{n_{j}}\right|\geq \eta{}
\right) &\leq& \varepsilon S/\eta,
\end{eqnarray*}
which tends to $0$ as $\varepsilon\to 0$.
It follows from these arguments that $\frac{1}{k}\sum_{j=1}^{k}\alpha_{n_{j}}h_{n_{j}}$ tends
to $\alpha_{*}h_{*}$ in probability.
Since
$$
\frac{1}{k}\sum_{j=1}^{k}c_{n_{j}}+\frac{1}{k}\sum_{j=1}^{k}\alpha_{n_{j}}h_{n_{j}}\to \zeta,\ k\to\infty,
$$ 
necessarily
$$
\frac{1}{k}\sum_{j=1}^{k}c_{n_{j}}\to c_{*},
$$
in probability for some $c_{*}\in C$, since $C$ is closed. But then $\zeta=c_{*}+\alpha_{*}h_{*}\in C+L$.

In the case $\sup_{n}\alpha_{n}=\infty$ there is a subsequence $n_{k}$ with $0<\alpha_{n_{k}}\to\infty$, $k\to\infty$ and
$\sum_{j=1}^{k}h_{n_{j}}/k\to h_{*}$ almost surely for some $h_{*}\in H$. Since
$$
\frac{c_{n_{k}}}{\alpha_{n_{k}}}+h_{n_{k}}\to 0,\ k\to\infty,
$$
also
$$
\frac{1}{k}\sum_{j=1}^{k}\frac{c_{n_{j}}}{\alpha_{n_{j}}}+\frac{1}{k}\sum_{j=1}^{k}h_{n_{j}}\to 0,\ k\to\infty,
$$
thus $\frac{1}{k}\sum_{j=1}^{k}\frac{c_{n_{j}}}{\alpha_{n_{j}}}\to -h_{*}\in C$ in probability, as $k\to\infty$.{}
But then $$-h_{*}\in C\cap (-H)\subset C\cap (-L)=\{0\},$$ which contradicts $0\notin H$.
We conclude that $\sup_{n}\alpha_{n}=\infty$ is not possible hence $C+L$ is closed.
\end{proof}


\begin{remark}{\rm Clearly, not all cones are small.  
One may consider $L^{\infty}_{+}\subset L^{0}$. This is generated
by the convex set $H:=\{g\in L^{0}_{+}:||g||_{\infty}\leq 1\}$ which is  
closed and bounded in $L^{0}$ (but contains
$0$). However, $L^{\infty}_{+}$ is clearly not closed in $L^{0}$ so it is not a small cone.

A subspace $L\subset L^{0}$ is never a small cone. Indeed, arguing by contradiction, let $H$ be a generator of $L$. Take 
an arbitrary $h_{0}\in H$. If $L$ is a subspace then also $-h_{0}\in L$ hence $h_{0}=-\alpha h_{1}$ for some $h_{1}\in H$
and $\alpha\geq 0$. Then, by convexity of $H$, $\frac{h_{0}+ \alpha h_{1}}{1+\alpha}=0\in H$,
which contradicts Definition \ref{deff}.} 
\end{remark}

\section{Market with dynamically traded assets and static options}\label{dynstat}

Let $\mathcal{F}_{t}$, $t=0,\ldots, T$ be a discrete-time filtration on the given probability space, $\mathcal{F}_{0}$
is the trivial sigma-algebra. Let $X_{t}$, $t=0,\ldots,T$ be an adapted
$\mathbb{R}^{d}$-valued process describing the discounted prices of $d$ dynamically traded assets.
We denote by $\langle\cdot,\cdot\rangle$ the standard scalar product in $\mathbb{R}^{d}$.

A trading strategy is a $\mathbb{R}^{d}$-valued process $\phi_{t}$, $t=1,\ldots,T$ such that
$\phi_{t}$ is $\mathcal{F}_{t-1}$-measurable, the set of all such strategies is denoted $\Phi$. 
We further assume that a set of zero-price static options $\mathcal{Y}\subset L^{0}$
is given. These are not liquid, an investor with initial capital $0$ can choose one element 
$h\in\mathcal{Y}$ at time $0$ and then his/her
portfolio value at time $t=0,\ldots,T$ will be
$$
V(\phi,h):=h+\sum_{j=1}^{t}\langle \phi_{j},X_{j}-X_{j-1}\rangle.{}
$$
We will also write $W_{t}(\phi):=\sum_{j=1}^{t}\langle \phi_{j},X_{j}-X_{j-1}\rangle$, $0\leq t\leq T$.

Denote $$
K_{0}:=\{V(\phi,0):\phi\in \Phi\}=\{W_{T}(\phi):\phi\in\Phi\},$$ 
the set of portfolio values in the dynamically traded assets.
Then $K_{0}+\mathcal{Y}$ describes all attainable positions from $0$ initial capital.

We say that $\mathbf{NA}$ holds is $K_{0}\cap L_{+}^{0}=\{0\}$. We say that $\mathbf{NA}(\mathcal{Y})$ holds
if $$
\left(K_{0}+\mathcal{Y}\right)\cap L_{+}^{0}=\{0\}.$$
It is well-known that, under $\mathbf{NA}$, the set $K_{0}-L_{+}^{0}$
is closed in $L^{0}$, see Theorem 6.9.2 of \cite{ds}. 

$\mathcal{M}$ is the set of probablities $Q\sim P$ on $(\Omega,\mathcal{F})$ such that $X_{t}$, $t=0,\ldots,T$ is
a $Q$-martingale with respect to the given filtration. Define $$
\mathcal{M}_{\infty}=\{Q\in \mathcal{M}:dQ/dP\in L^{\infty}\}.$$
Analogously, $$
\mathcal{M}(\mathcal{Y}):=\{Q\in\mathcal{M}:\mbox{ for all }Y\in\mathcal{Y},Y\in L^{1}(Q)\mbox{ and }E_{Q}[Y]\leq 0\}
$$
and $\mathcal{M}_{\infty}(\mathcal{Y}):=\{Q\in\mathcal{M}(\mathcal{Y}):dQ/dP\in L^{\infty}\}$.

Clearly, $\mathbf{NA}(\mathcal{Y})$ implies $\mathbf{NA}$ and the latter is equivalent to $\mathcal{M}_{\infty}\neq\emptyset$,
see Theorem 6.1.1 of \cite{ds}. It is clear that $\mathcal{M}(\mathcal{Y})\neq \emptyset\Rightarrow \mathbf{NA}(\mathcal{Y})$.
One may cherish the hope that $\mathbf{NA}(\mathcal{Y})$ is equivalent to
$\mathcal{M}(\mathcal{Y})\neq \emptyset$. The situation, however, is not so simple.

\begin{example}\label{nohope}{\rm Let $d=T=1$, $X_{0}=0$, $X_{1}$ a standard Cauchy random variable,
$\mathcal{F}:=\mathcal{F}_{1}=\sigma(X_{1})$; let $\mathcal{Y}$ be the set of zero-mean random
variables. If $Q\sim P$ and $E_{Q}[Y]\leq 0$ for all $Y\in\mathcal{Y}$ then necessarily $Q=P$. But
$P\notin\mathcal{M}(\mathcal{Y})$ since $X_{1}$ is not $P$-integrable. Hence $\mathcal{M}(\mathcal{Y})=\emptyset$.{}
We show that, nevertheless, $\mathbf{NA}(\mathcal{Y})$ holds. 

Suppose we had $wX_{1}+Y\geq 0$
for some $Y\in \mathcal{Y}$ and $w>0$.  
Then $X_{1}\geq -Y/w$ so $X_{1}^{-}$ would be integrable which is nonsense.
$w<0$ leads to a similar contradiction. Hence $w=0$ but then $E_{P}[Y]=0$
implies $Y=0$ a.s.}  
\end{example}

This example shows that we need additional hypotheses on $\mathcal{Y}$ for the equivalence
$$\mathcal{M}(\mathcal{Y})\neq\emptyset\iff \mathbf{NA}(\mathcal{Y})$$ to hold. Note that
a proof would probably require the use of the Kreps-Yan 
separation theorem which needs, roughly speaking, that 
the cone ${K}_{0}+\mathcal{Y}-L_{+}^{0}$ is closed. 


\begin{remark}\label{egy}{\rm We briefly review certain results of \cite{non-closed} in order to put our achievements in context.
In that paper $P\in\mathcal{M}$ was assumed and, instead of $K_{0}$, the set 
$$
\tilde{K}_{0}:=\{W_{T}(\phi):(W_{t}(\phi))_{0\leq t\leq T}\mbox{ is a supermartingale}\}\subset K_{0}
$$
was considered. The authors took $\mathcal{Y}=L^{1}(\Omega,\sigma(X_{T}),P)$ (the set of integrable 
$\sigma(X_{T})$-measurable random variables). They exhibited (in the case $d=1$, $T=2$) a bounded price
process for which $\tilde{K}_{0}+\mathcal{Y}-L_{+}^{0}$ failed to be closed in $L^{0}$ 
(this is not surprising, $\mathcal{Y}$ not being closed in $L^{0}$ either), even 
$(\tilde{K}_{0}+\mathcal{Y}-L_{+}^{0})\cap L^{1}(P)$ failed to be closed in $L^{1}(P)$. The sequences of random variables
they constructed had further specific properties that we do not list here.

The proof of \cite{non-closed} actually shows that $({K}_{0}+\mathcal{Y}-L_{+}^{0})\cap L^{1}(P)$ is not
closed either. 
Indeed, in the last step of the proof in Section 3 of \cite{non-closed},
one takes $0\leq g\leq u+v$ with $u\in \tilde{K}_{0}$, $v\in\mathcal{Y}$. We point out that
even if we take $u$ from the larger set $K_{0}$, necessarily $-v\leq g-v\leq u$ and, since $v$ is integrable, 
it \emph{follows} that $u\in \tilde{K}_{0}$ (a martingale transform with terminal value
bounded from below by an integrable random variable is, in fact, a martingale, see Theorems 1 and 2 of \cite{jacod}) 
hence the proof of \cite{non-closed} goes through
with $\tilde{K}_{0}$ replaced by $K_{0}$.

There is, however, a conceptual difference between \cite{non-closed} and the present paper.
We regard $\mathcal{Y}$ as a family of options available at $0$ cost while the choice
in \cite{non-closed} is not (and cannot be) interpreted in this way.} 
\end{remark}

Our purpose in the present paper is to
provide conditions on $\mathcal{Y}$ that imply closedness of $K_{0}+\mathcal{Y}-L_{+}^{0}$ 
in meaningful cases. The set $\mathcal{Y}$ in Remark \ref{egy} was rather large so choosing
smaller sets might lead to success. We propose the following
hypothesis.

\begin{assumption}\label{asi} $\mathcal{Y}=\sum_{j=1}^{m}C_{j}$ where
each $C_{j}$ is a small cone and $$
\left(K_{0}+\sum_{j=1}^{n}C_{j}\right)\cap \left(-C_{n+1}\right)=\{0\}
$$
for all $0\leq n\leq m-1$.
\end{assumption}

\begin{remark}\label{ellen} {\rm Recall that $K_{0}$ is a vector space so in the case $m=1$, $K_{0}\cap (-C_{1})=\{0\}$ 
is equivalent to $K_{0}\cap C_{1}=\{0\}$ which means that the non-zero static options should not
be replicable from $0$ initial capital. Hence this is a sort of no-redundancy hypothesis.}
\end{remark}

Our main result comes next: a fundamental theorem of asset pricing in the present setting.

\begin{theorem}\label{nao} Let $\mathcal{Y}\subset L^{0}$ be arbitrary. Consider the following statements.
\begin{enumerate}
	
\item $\mathbf{NA}(\mathcal{Y})$ holds.

\item For each $Y\in\mathcal{Y}$ there is $Q(Y)\in \mathcal{M}_{\infty}$
such that $Y$ is $Q(Y)$-integrable and $$
E_{Q(Y)}[Y]\leq 0.
$$

\item $\mathcal{M}_{\infty}(\mathcal{Y})\neq \emptyset$.

\item The set $K_{0}+\mathcal{Y}-L^{0}_{+}$ is closed.

\end{enumerate}  
Then $3.\to 2.\leftrightarrow 1.$ always. 
If Assumption \ref{asi} holds then $1.\to 4.$ and also $3.\leftrightarrow 2.\leftrightarrow 1.$
\end{theorem}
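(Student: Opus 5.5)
The plan is to prove the implications in the order $3.\to 2.\to 1.\to 2.$, and then, under Assumption \ref{asi}, $1.\to 4.\to 3.$.

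\emph{$3.\to 2.$} This is trivial: take $Q(Y):=Q$ for any $Q\in\mathcal{M}_{\infty}(\mathcal{Y})$.

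\emph{$2.\to 1.$} Let $W_{T}(\phi)+Y\geq 0$ with $Y\in\mathcal{Y}$, and set $Q:=Q(Y)$. Then $W_{T}(\phi)\geq -Y$, and $-Y$ is $Q$-integrable. The generalized martingale transform $W_{t}(\phi)$ has $Q$-integrable lower bound at $T$, so by Theorems 1 and 2 of \cite{jacod} it is a true $Q$-martingale, and $E_{Q}[W_{T}(\phi)]=0$. Hence $E_{Q}[W_{T}(\phi)+Y]\leq 0$. A non-negative variable with this property vanishes $Q$-a.s., hence $P$-a.s.

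\emph{$1.\to 2.$} Fix $Y\neq 0$. First, $\mathbf{NA}(\mathcal{Y})$ gives $(K_{0}+\mathbb{R}_{+}Y)\cap L^{0}_{+}=\{0\}$: if $W+\lambda Y\geq 0$ with $\lambda>0$, then $W/\lambda+Y\in K_{0}+\mathcal{Y}$ is non-negative, hence zero. We then distinguish two cases.

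\emph{Case $-Y\in K_{0}$.} Write $Y=-W_{T}(\psi)$. Choose a reference measure $P'\sim P$ with $dP'/dP\propto e^{-|Y|}$, so that $Y\in L^{1}(P')$. Applying Theorem 6.1.1 of \cite{ds} under $P'$ gives $Q\in\mathcal{M}$ with $dQ/dP'$ bounded; then $dQ/dP$ is bounded too, so $Q\in\mathcal{M}_{\infty}$. Since $Y$ is $Q$-integrable, the argument of $2.\to 1.$ gives $E_{Q}[Y]=-E_{Q}[W_{T}(\psi)]=0$.

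\emph{Case $-Y\notin K_{0}$.} Apply Lemma \ref{comp} with $L:=\mathcal{C}(\{Y\})$, whose generator $\{Y\}$ is closed, convex, bounded and avoids $0$, and with $C:=K_{0}-L^{0}_{+}$, which is closed by $\mathbf{NA}$. For the hypothesis $C\cap(-L)=\{0\}$, suppose $W-\ell=-\alpha Y$. Then $W+\alpha Y=\ell\geq 0$, so $\ell=0$ by the first observation, and hence $-\alpha Y=W\in K_{0}$, which forces $\alpha=0$. So $K_{0}+\mathbb{R}_{+}Y-L^{0}_{+}$ is closed in $L^{0}$. Take $P'\sim P$ with bounded density making $Y$ and all $X_{t}$ integrable. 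Then the intersection of this cone with $L^{1}(P')$ is closed in $L^{1}(P')$, contains $-L^{1}_{+}(P')$, and meets $L^{1}_{+}(P')$ only in $0$. The Kreps--Yan theorem gives $z>0$ with $z\in L^{\infty}$ and $E_{P'}[z\xi]\leq 0$ on this set. Set $dQ:=z\,dP'$ (normalized). Testing with $\pm\mathbf{1}_{A}e_{i}(X_{t}-X_{t-1})$ for $A\in\mathcal{F}_{t-1}$ shows $Q\in\mathcal{M}_{\infty}$, and testing with $Y$ gives $E_{Q}[Y]\leq 0$.

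\emph{$1.\to 4.$ under Assumption \ref{asi}.} We argue by induction on $n$, showing that $D_{n}:=K_{0}+\sum_{j\leq n}C_{j}-L^{0}_{+}$ is closed. The base case $D_{0}=K_{0}-L^{0}_{+}$ is closed by $\mathbf{NA}$. For the step we apply Lemma \ref{comp} with $C:=D_{n}$ and $L:=C_{n+1}$, so we must check $D_{n}\cap(-C_{n+1})=\{0\}$. If $W+c-\ell=-c'$ with $c\in\sum_{j\leq n}C_{j}$ and $c'\in C_{n+1}$, then $W+(c+c')=\ell\geq 0$. Since $\mathcal{Y}$ is a cone, $c+c'\in\mathcal{Y}$, so $\mathbf{NA}(\mathcal{Y})$ gives $\ell=0$. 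Then $W+c=-c'$ lies in $(K_{0}+\sum_{j\leq n}C_{j})\cap(-C_{n+1})=\{0\}$, so $c'=0$. Hence $D_{m}=K_{0}+\mathcal{Y}-L^{0}_{+}$ is closed.

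\emph{$4.\to 3.$ under Assumption \ref{asi}.} For each generator $H_{j}$, Lemma \ref{bou} gives $R_{j}$ with bounded density and $\sup_{h\in H_{j}}E_{R_{j}}[|h|]<\infty$. Define $P'$ by $dP'/dP\propto \prod_{j}(dR_{j}/dP)\cdot e^{-\sum_{t}|X_{t}|}$. Its density with respect to $P$ is bounded and strictly positive, and since the other factors are bounded, $E_{P'}[|h|]\leq c\,E_{R_{j}}[|h|]$ for $h\in H_{j}$. Thus every element of $\mathcal{Y}$ lies in $L^{1}(P')$. Now run the Kreps--Yan argument in $L^{1}(P')$ on $(K_{0}+\mathcal{Y}-L^{0}_{+})\cap L^{1}(P')$, which is closed by 4. and meets $L^{1}_{+}(P')$ only in $0$ by 1. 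The resulting $Q$ has bounded density, is a martingale measure, and satisfies $E_{Q}[Y]\leq 0$ for all $Y\in\mathcal{Y}$. So $Q\in\mathcal{M}_{\infty}(\mathcal{Y})$, and together with $3.\to 2.\to 1.$ this gives $3.\leftrightarrow 2.\leftrightarrow 1.$

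The main obstacle is the closedness input in $1.\to 2.$. There one has to notice that $\mathbf{NA}(\mathcal{Y})$, stated for a set that need not be a cone, still controls the ray $\mathbb{R}_{+}Y$, and that the replicable case $-Y\in K_{0}$ has to be split off before Lemma \ref{comp} applies. The passage from $L^{0}$-closedness to an $L^{1}(P')$ separation is routine.
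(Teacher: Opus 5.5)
Your proposal is correct and follows the paper's proof essentially step by step: Jacod--Shiryaev for $2.\to 1.$; splitting off the replicable case, then Lemma \ref{comp} with $L=\mathcal{C}(\{Y\})$ and $C=K_{0}-L^{0}_{+}$, then Kreps--Yan for $1.\to 2.$; the same induction via Lemma \ref{comp} for $1.\to 4.$; and Kreps--Yan in an $L^{1}$ space in which $\mathcal{Y}$ and the increments of $X$ are integrable for $1.\to 3.$ The differences are only in how you choose reference measures (for instance, you multiply the densities $dR_{j}/dP$ by $e^{-\sum_{t}|X_{t}|}$ instead of chaining them, which also avoids the sign slip in the paper's display), and, as in the paper, the ``always'' part tacitly needs $0\in\mathcal{Y}$ (you use $\mathbf{NA}(\mathcal{Y})\Rightarrow\mathbf{NA}$ in $1.\to 2.$, while your $2.\to 1.$ only gives $(K_{0}+\mathcal{Y})\cap L^{0}_{+}\subseteq\{0\}$), which holds automatically under Assumption \ref{asi}.
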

\begin{proof} Suppose that $2.$ holds and $V(\phi,Y)\geq 0$ almost surely for some
$(\phi,Y)\in \mathcal{A}$. Notice that $W_{T}(\phi)\geq -h$ in this case and, $Y$
being $Q(Y)$-integrable and $(X_{t})_{0\leq t\leq T}$ being a $Q(Y)$-martingale,
the $Q(Y)$-martingale transform $(W_{t}(\phi))_{0\leq t\leq T}$ has a final
value that is above a $Q(Y)$-integrable random variable hence 
$(W_{t}(\phi))_{0\leq t\leq T}$ is a true $Q(Y)$-martingale by Theorems 1 and 2 of \cite{jacod}.
Then $E_{Q(Y)}[W_{T}(\phi)]=0$ and $E_{Q(Y)}[Y]\leq 0$. Consequently, $E_{Q(Y)}[V(\phi,Y)]\leq 0$ and
$V(\phi,Y)=0$ almost surely: $\mathbf{NA}(\mathcal{Y})$ holds.

Suppose that $1.$ is true. Let $Y\in\mathcal{Y}$ be arbitrary. Define $dP'/dP:=e^{-|Y|}/E[e^{-|Y|}]$,
clearly $Y\in L^{1}(P')$. $\mathbf{NA}(\mathcal{Y})$ implies $\mathbf{NA}$ but, equivalently,
also $\mathbf{NA}$ under the probability $P'$. Hence Theorem 6.1.1 implies the existence of
$Q_{0}\in\mathcal{M}$ with $dQ_{0}/dP'\in L^{\infty}$. If $Y\in K_{0}$ then $E_{Q_{0}}[Y]=0$
(since $Y=W_{T}(\phi)$ for some $\phi$ and it is integrable so $W_{t}(\phi)$, $t=0,\ldots,T${}
is a $Q_{0}$-martingale). $dQ_{0}/dP\in L^{\infty}$ by construction.

In the case $Y\notin K_{0}$, $L:=\{\alpha Y:\alpha\geq 0\}$ is a small cone with $K_{0}\cap (-L)=\{0\}$.
We claim that also $(K_{0}-L_{+}^{0})\cap (-L)=\{0\}$.
Indeed, if $k_{0}-l=-\alpha Y$ for some $k_{0}\in K_{0}$, $l\in L$, $\alpha\geq 0$ then
$k_{0}+\alpha Y=l\in L_{+}^{0}$ so $l=0$ by $\mathbf{NA}(\mathcal{Y})$ and $k_{0}=-\alpha Y$ which necessitates $\alpha=0$.
$K_{0}-L_{+}^{0}$ is closed by Theorem 6.9.2 of \cite{ds}.
Now by Lemma \ref{comp} (or by direct argument), $K_{0}-L_{+}^{0}+L$ is a closed cone in $L^{0}$.

Theorem 5.2.3 of \cite{ds} (with the choice $p=1$, $C:=(K_{0}-L_{+}^{0}+L)\cap L^{1}(Q_{0})$
and dual pair $E=L^{1}(Q_{0})$, $E'=L^{\infty}$) provides a probability $Q(Y)\sim P$
with $dQ(Y)/dQ_{0}\in L^{\infty}$ such that $Q(Y)\in \mathcal{M}_{\infty}$ and also
$E_{Q(Y)}[Y]\leq 0$: $2.$ follows.


$3.$ trivially implies $2.$ Now let Assumption \ref{asi} be in force and assume 1. 
which implies $\mathbf{NA}$ so $K_{0}-L_{+}^{0}$ is closed by Theorem 6.9.2 of \cite{ds}.

Apply Lemma \ref{comp} inductively. Assume that $K_{0}-L_{+}^{0}+\sum_{j=1}^{n}C_{j}$
is closed for some $0\leq n\leq m-1$. Take $\xi=k-l+\sum_{j=1}^{n}c_{j}\in K_{0}-L_{+}^{0}+\sum_{j=1}^{n}C_{j}$.
If we had $\xi\in -C_{n+1}$ then $l=k+\sum_{j=1}^{n}c_{j}-\xi\in (K_{0}+\sum_{j=1}^{n+1}C_{j})\cap L_{+}^{0}$
so $\mathbf{NA}(\mathcal{Y})$ implies $l=0$ and then $\xi\in (K_{0}+\sum_{j=1}^{n}C_{j})\cap (-C_{n+1})=\{0\}$,
by assumption. As $C_{n+1}$ is a small cone, Lemma \ref{comp} implies (with the choice
$C=K_{0}+\sum_{j=1}^{n}C_{j}$, $L=C_{n+1}$) that $K_{0}+\sum_{j=1}^{n+1}C_{j}$ is also
a closed cone. Finally, $K_{0}-L_{+}^{0}+\sum_{j=1}^{m}C_{j}$ is shown to be closed and 4. follows.

Now we prove that, under Assumption \ref{asi}, 1. implies 3.
We have already seen that $\mathbf{NA}(\mathcal{Y})$ implies $4.$ 
Let $H_{j}$ generate $C_{j}$ as in Definition \ref{deff}. 
Lemma \ref{bou} provides $R_{j}\sim P$, $j=1,\ldots,m$ with $dR_{j}/dR_{j-1}$, $j=2,\ldots, m$ 
bounded and also $dR_{1}/dP$ bounded, such that $\sup_{h\in H_{j}}E_{R_{j}}[|h|]<\infty$.{}
It follows that $E_{R_{m}}[|\xi|]<\infty$ for all $\xi\in \sum_{j=1}^{m}C_{j}$. Now
define $$
dR/dR_{m}:=\exp\left(\sum_{t=0}^{T}|X_{j}|\right)/E_{R_{m}}\left[\exp\left(\sum_{t=0}^{T}|X_{j}|\right)\right].
$$
Notice that $\langle \zeta,X_{t}-X_{t-1}\rangle$ are $R$-integrable for each $1\leq t\leq T$ 
and for each $\mathcal{F}_{t-1}$-measurable 
$d$-dimensional bounded random variable $\zeta$. 

Theorem 5.2.3 of \cite{ds} (with the choice $p=1$, $C:=(K_{0}-L_{+}^{0}+\mathcal{Y})\cap L^{1}(R)$
and dual pair $E=L^{1}(R)$, $E'=L^{\infty}$) provides a probability $Q_{*}\sim P$
such that $E_{Q_{*}}[c]\leq 0$ for all $c\in (K_{0}-L_{+}^{0}+\mathcal{Y})\cap L^{1}(R)$ 
and $dQ_{*}/dR$ is bounded (hence $dQ_{*}/dP$
is also bounded). By the discussion above, $(K_{0}-L_{+}^{0}+\mathcal{Y})\cap L^{1}(R)$ contains
all random variables of the form $\langle \zeta,X_{t}-X_{t-1}\rangle$ which implies $Q_{*}\in \mathcal{M}_{\infty}$.
The construction of $Q_{*}$ ensures that $E_{Q_{*}}[Y]\leq 0$ for all $Y\in\mathcal{Y}$, $3.$ follows.
\end{proof}

As already mentioned above, the equivalence $1.\leftrightarrow 3.$ can be considered as a ``right'' fundamental theorem of
asset pricing for a market with both liquid assets and static options. It asserts that there is a
``uniform'' pricing measure $Q_{*}\in\mathcal{M}_{\infty}(\mathcal{Y})$ which works for all the static options at the same time.
Without Assumption \ref{asi}, we can guarantee only that for each $Y\in\mathcal{Y}$ there is \emph{some} pricing measure
$Q(Y)\in\mathcal{M}_{\infty}$ which may depend on $Y$: this is the content of $2.$ in Theorem \ref{nao}.

\section{Utility maximization with semi-static strategies}\label{utill}

In the discussion following Theorem 1.3 of \cite{non-closed} the authors point out that the failure
of the closedness of final outcomes in a market with semistatic trading strategies makes utility
maximization a delicate issue in these models.
An important consequence of 4. in Theorem \ref{nao} is that, under Assumption \ref{asi} and $\mathbf{NA}(\mathcal{Y})$, 
it is possible to
prove the existence of optimal strategies for a utility maximizer. Define $\mathcal{A}:=\Phi\times\mathcal{Y}$.

\begin{theorem}\label{util} Let Assumption \ref{asi} and
$\mathbf{NA}(\mathcal{Y})$ hold. Let $u:\mathbb{R}\to\mathbb{R}$ be non-decreasing, concave and bounded from above.
Then there exist $(\phi^{*},h^{*})\in\mathcal{A}$ such that
$$
\sup_{(\phi,h)\in\mathcal{A}}E[u(V(\phi,h))]=E[u(V(\phi^{*},h^{*}))],
$$
that is, the utility maximization problem with dynamic assets and static options is well-posed.	
\end{theorem}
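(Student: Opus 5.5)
We argue by the direct method, with the closedness statement 4. of Theorem \ref{nao} as the key input. Set $\mathcal{K}:=K_{0}+\mathcal{Y}$ and $U:=\sup_{(\phi,h)\in\mathcal{A}}E[u(V(\phi,h))]$. Since $u$ is bounded from above by some constant $M$, we have $U\le M<\infty$. Also $U\geq u(0)>-\infty$, because $(0,0)\in\mathcal{A}$ (each $C_{j}$ is a cone, so $0\in\mathcal{Y}$). Take a maximizing sequence $\xi_{n}=V(\phi_{n},h_{n})\in\mathcal{K}$ with $E[u(\xi_{n})]\to U$. By Theorem \ref{nao}, the set $\mathcal{K}-L^{0}_{+}$ is closed in $L^{0}$ and convex (it is a sum of cones). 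The plan is to produce a limit point of suitable convex combinations of the $\xi_{n}$ inside $\mathcal{K}-L^{0}_{+}$ and to show that it is optimal.

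\textbf{Step 1: reduce to the case where $u$ is non-constant.} If $u$ is constant, every element of $\mathcal{A}$ is optimal and there is nothing to prove. Otherwise, since $u$ is concave, non-decreasing and bounded above, we have $u(x)\to-\infty$ as $x\to-\infty$, and indeed $u(x)\le a+bx$ for $x\le 0$, with some $b>0$.

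\textbf{Step 2: boundedness of negative parts.} From $E[u(\xi_{n})]\geq U-1$ and $u\le M$, for large $n$ we get $E[u(\xi_{n})^{-}]\le M-U+1+|M|$. Since $u(x)\to -\infty$ as $x\to-\infty$, Markov's inequality makes $\{\xi_{n}^{-}\}$ bounded in $L^{0}$.

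\textbf{Step 3: boundedness of positive parts, which is the main obstacle.} This step is where $\mathbf{NA}(\mathcal{Y})$ enters. We first replace $\xi_{n}$ by $\xi_{n}\wedge 1$ if necessary, which changes nothing in utility once $u$ is saturated. Better, we use the following standard argument. Suppose $\{\xi_{n}\}$ (or its convex hull) were unbounded in $L^{0}$. Set $\beta_{n}:=\max(1,\|\cdot\|)$-type normalizers and form $\xi_{n}/\beta_{n}$. These have negative parts tending to $0$ in probability, by Step 2. Take forward convex combinations using Lemma \ref{komlos}, applied to the bounded convex hull of the truncations $(\xi_{n}/\beta_{n})\wedge 1$. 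The limit lies in $\mathcal{K}-L^{0}_{+}$ by closedness, is non-negative, and is nonzero with positive probability. This contradicts $\mathbf{NA}(\mathcal{Y})$, which is equivalent to $(\mathcal{K}-L^{0}_{+})\cap L^{0}_{+}=\{0\}$. The delicate points are choosing the normalizers so that the limit is nonzero (use the standard lemma of \cite{ds} that a sequence in $L^{0}_{+}$ admits forward convex combinations converging a.s. in $[0,\infty]$), and the fact that we work with the convex cone $\mathcal{K}-L^{0}_{+}$ rather than $\mathcal{K}$ itself.

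\textbf{Step 4: passing to the limit.} We apply the Delbaen--Schachermayer convex-combination lemma to $\xi_{n}$, which is possible because we now have boundedness in $L^{0}$. This gives $\tilde{\xi}_{n}\in\mathrm{conv}(\xi_{n},\xi_{n+1},\ldots)\subset\mathcal{K}$ with $\tilde{\xi}_{n}\to\xi^{*}$ a.s. and $\xi^{*}$ finite. By concavity, $E[u(\tilde{\xi}_{n})]\geq\min_{k\geq n}E[u(\xi_{k})]\to U$. Since $u\le M$, the reverse Fatou lemma gives $E[u(\xi^{*})]\geq\limsup_{n} E[u(\tilde{\xi}_{n})]\ge U$, using continuity of $u$ (concave on $\mathbb{R}$). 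By closedness, $\xi^{*}\in\mathcal{K}-L^{0}_{+}$, so $\xi^{*}\le V(\phi^{*},h^{*})$ for some $(\phi^{*},h^{*})\in\mathcal{A}$. Monotonicity of $u$ then gives $E[u(V(\phi^{*},h^{*}))]\geq U$, hence equality.
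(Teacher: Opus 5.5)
Your route differs from the paper's and is viable in principle. The paper obtains $L^1$-boundedness of the maximizing sequence dually: it passes to $\tilde P$ from \cite{oldlargo}, under which all $V(\phi_n,h_n)$ are integrable. It then takes $Q\in\mathcal{M}(\mathcal{Y})$ from the implication $1.\to 3.$ of Theorem \ref{nao}, which gives $E_Q[V^+]\le E_Q[V^-]$. You instead want $L^0$-boundedness directly from closedness (statement 4.) and $\mathbf{NA}(\mathcal{Y})$, via a normalize-and-truncate contradiction. That needs no pricing measure at all, so it is more self-contained.

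However, Step 3, which you rightly call the main obstacle, is only gestured at. The one input you feed it, $L^0$-boundedness of $\{\xi_n^-\}$ from Step 2, is too weak, because $L^0$-boundedness is not preserved under convex combinations. For example, i.i.d. nonnegative variables with infinite mean form an $L^0$-bounded family whose Ces\`aro means diverge almost surely. Convergence to $0$ in probability is not preserved either. So you can conclude neither of the following:
\begin{itemize}
\item that $\Gamma:=\mathrm{conv}\{\xi_n\}$, the set Step 4 needs to be bounded, has $L^0$-bounded negative parts;
\item that the convex combinations of your normalized truncations produced by Lemma \ref{komlos} have a nonnegative limit.
\end{itemize}
The remark about replacing $\xi_n$ by $\xi_n\wedge 1$ once $u$ ``is saturated'' should also be dropped: $u(x)=-e^{-x}$ never saturates, and truncating changes the utility.

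The fix is already in your Step 1, and it is exactly what the paper uses. The bound $u(x)\le a+bx$ for $x\le 0$ gives $c:=\sup_n E[\xi_n^-]<\infty$, not merely $L^0$-boundedness. Since $x\mapsto x^-$ is convex, $E[\eta^-]\le c$ for every $\eta\in\Gamma$.

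Now suppose $\Gamma$ were unbounded in $L^0$. Then there are $\alpha>0$ and $\eta_n\in\Gamma$ with $P(\eta_n\ge n)\ge\alpha$. Set $\zeta_n:=(\eta_n/n)\wedge 1$. Then:
\begin{itemize}
\item $\zeta_n\in\mathcal{K}-L^0_+$ and $\zeta_n\le 1$;
\item $E[\zeta_n^-]\le c/n$ and $E[\zeta_n^+]\ge\alpha$.
\end{itemize}
This sequence is bounded in $L^1$, so Ces\`aro means $\bar\zeta_k$ along a subsequence converge a.s. to some $\zeta^*$ (Komlós). Since $E[\bar\zeta_k^-]\le\frac1k\sum_{j\le k}c/n_j\to 0$, the limit satisfies $\zeta^*\ge 0$. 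Bounded convergence of $\bar\zeta_k^+\le 1$, together with $E[\bar\zeta_k]\ge\alpha-o(1)$, gives $E[\zeta^*]\ge\alpha$. Closedness then yields $0\ne\zeta^*\in(\mathcal{K}-L^0_+)\cap L^0_+$, contradicting $\mathbf{NA}(\mathcal{Y})$.

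With $\Gamma$ bounded in $L^0$, apply Lemma \ref{komlos} to $\Gamma$ itself. The Delbaen--Schachermayer lemma as usually stated concerns nonnegative sequences, so it does not apply directly to real-valued $\xi_n$. After that, your Step 4 (concavity, reverse Fatou, closedness, monotonicity) goes through as written.
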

\begin{proof} We follow ideas from Proposition 4.3 of \cite{largo}.
Let $(\phi_{n},h_{n})$, $n\in\mathbb{N}$ be a sequence such that $$
\sup_{(\phi,h)\in\mathcal{A}}E[u(V(\phi,h))]=\lim_{n\to\infty}
E[u(V(\phi_{n},h_{n}))]
$$ and $E[u(V(\phi_{n},h_{n}))]>-\infty$ for all $n$. $u$ being bounded from above, 
necessarily 
\begin{equation}\label{suppe}
\sup_n E[u^-(V(\phi_{n},h_{n}))]<\infty.	
\end{equation}
{}
The case of constant $u$ is trivial, otherwise 
there exist $c_{0},c_{1}>0$ such that $u(x)\leq -c_{0}|x|+c_{1}$ for
all $x\leq 0$, see Lemma 4.1 of \cite{largo} for this simple fact. Now
\eqref{suppe} implies
$$
\sup_n E[V^-(\phi_{n},h_{n})]<\infty.
$$ 
Proposition 2 of \cite{oldlargo} implies that there is $\tilde{P}\sim P$ with bounded $d\tilde{P}/dP$ such that
$V(\phi_{n},h_{n})$ are $\tilde{P}$-integrable for all $n$. 

Take $Q\in\mathcal{M}(\mathcal{Y})$ with $dQ/d\tilde{P}$ bounded, as constructed
in Theorem \ref{nao} (note that $\mathbf{NA}(\mathcal{Y})$ is true under $P$ iff it is true under $\tilde{P}$).
Hence $V(\phi_{n},h_{n})$ are $Q$-integrable and 
$\sup_n E_Q[V^{-}(\phi_{n},h_{n})]<\infty$. As $E_{Q}[Z]\leq 0$ for each 
$Z\in (K_{0}+\mathcal{Y}-L_{+}^{0})\cap L^{1}(\tilde{P})$ and $V(\phi_{n},h_{n})\in L^{1}(\tilde{P})$
for all $n$, it follows that also 
$$
\sup_n E_Q[V^{+}(\phi_{n},h_{n})]\leq \sup_n E_Q[V^{-}(\phi_{n},h_{n})]<\infty$$ 
so, finally, $\sup_n E_Q[|V(\phi_{n},h_{n})|]<\infty$.

Applying Theorem 1 of \cite{komloss} in $L^1(Q)$, for some subsequence $n_{k}$, $k\in\mathbb{N}$,
the C\'esaro means of $V(\phi_{n_{k}},h_{n_{k}})$ converge to some
random variable $X$ almost surely. By 4. in Theorem \ref{nao},
$X\in K_{0}+\mathcal{Y}-L_{+}^{0}$ so $X\leq V(\phi^{*},h^{*})$ for some
$(\phi^{*},h^{*})\in\mathcal{A}$. By concavity of $u$
and the limsup Fatou lemma, 
\begin{eqnarray*}
E[u(V(\phi^{*},h^{*}))]&\geq& E[u(X)]\geq \lim_{k\to\infty}E\left[u\left(V\left(\frac{1}{k}\sum_{l=1}^{k}\phi_{n_{l}},
\frac{1}{k}\sum_{l=1}^{k}h_{n_{l}}\right)\right)\right]\\
&\geq& \lim_{k\to\infty}\frac{1}{k}\sum_{l=1}^{k}E[u(V(\phi_{n_{l}},h_{n_{l}}))]=\lim_{k\to\infty}E\left[u\left(V\left(\phi_{n_{k}},
h_{n_{k}}\right)\right)\right]\\
&=&\sup_{(\phi,h)\in\mathcal{A}}E[u(V(\phi,h))],
\end{eqnarray*}
and the result follows since $$E[u(V(\phi^{*},h^{*}))]\leq \sup_{(\phi,h)\in\mathcal{A}}E[u(V(\phi,h))],$$ trivially.
\end{proof}

\section{Examples}\label{exaudi}

In this section we present several examples of small cones. For $H\subset L^{0}$, $\mathrm{conv}(H)$ denotes 
the set of finite convex combinations of elements in $H$.

\begin{example}{\rm Let $\xi_{n}$, $n\in\mathbb{N}$ be a sequence of zero-mean independent random variables
such that $J:=\sup_{n}E[\xi_{n}^{2}]<\infty$. 
Let $a_{n}$, $n\in\mathbb{N}$ be a sequence of constants satisfying $a_{\star}\leq a_{n}\leq a_{\dagger}$, $n\in\mathbb{N}$
for some $0<a_{\star}\leq a_{\dagger}$. Let $H:=\{\xi_{n}+a_{n},n\in\mathbb{N}\}$. 

For each $h\in\mathrm{conv}(H)$,
$h=\sum_{j=1}^{k}\lambda_{j}(\xi_{j}+a_{j})$ for some $k$ and convex weights $\lambda_{1},\ldots,\lambda_{k}$
so $\mathrm{var}(h)\leq J\sum_{j=1}^{k}\lambda_{j}^{2}\leq J$ and $E[|h|^{2}]\leq J+a_{\dagger}^{2}$. It follows that
$\mathrm{conv}(H)$ is uniformly integrable.

Clearly, $E[h]\geq a_{\star}$ for all $h\in \mathrm{conv}(H)$.  
If $h_{n}\in H$ converge almost surely to some $\hat{h}$ also
$E[\hat{h}]\geq a_{\star}$ by uniform integrability. This implies $0\notin H':=\overline{\mathrm{conv}(H)}$.
Also note that, for all $h\in\mathrm{conv}(H)$,
$$
P(|h|\geq l)\leq E[h^{2}]/l^{2}\leq (J+a_{\dagger}^{2})/l^{2}\to 0$$ 
as $l\to\infty$, so $\mathrm{conv}(H)$ and hence $H':=\overline{\mathrm{conv}(H)}$ are bounded in $L^{0}$. 
We conclude that $\mathcal{C}({H}')$ is a small cone.

Such an example is not artificial at all: the set of payoffs in $\mathcal{C}({H}')$
are similar to those available in the Arbitrage Pricing Theory model of \cite{ross} which 
can be used to describe one-step \emph{large financial markets}, see also \cite{oldlargo,largo}.}  
\end{example}

\begin{example}{\rm In a similar vein, if $H=\{\xi_{n}, n\in\mathbb{N}\}$ are independent with standard Cauchy
distribution then all elements of $H':=\overline{\mathrm{conv}(H)}$ have this same distribution so
${H}'$ is a closed, bounded convex set not containing $0$.  $\mathcal{C}(H')$ is thus a small cone.}
\end{example}

\begin{example}{\rm Let $\xi_{i}\in L^{0}$, $i=1,\ldots,N$ be linearly independent.
Then $$\mathcal{C}(\{\xi_{i},i=1,\ldots,N\})$$ is a small cone. Indeed, $\mathrm{conv}(\{\xi_{i},i=1,\ldots,N\})$ is closed in this case
(see e.g.\ Theorem 19.1 of \cite{rockafellar}) and it does not contain $0$ by linear independence.}
\end{example}

In the rest of this section, we enter the setting of Section \ref{dynstat}. 
For simplicity, we assume that the number of dynamically traded assets is $d=1$ and we will only look
at the case $m=1$ in Assumption \ref{asi}. In the two examples below
we provide infinite dimensional small cones with clear financial interpretations.

\begin{example}\label{coption}{\rm 
Let $0<K_{\flat}<K_{\sharp}$ and consider the call options $$\{(X_{T}-K)^{+}: K_{\flat}\leq K\leq K_{\sharp}\}.$$
Take a mapping $Q:[K_{\flat},K_{\sharp}]\to \mathcal{M}$ such that 
$$
D:=\sup_{K}E_{Q(K)}[(X_{T}-K)^{+}]<\infty,
$$
that is, the prices of the available call options should be bounded from above: a rather natural assumption.

Define $$H_{1}:=\{(X_{T}-K)^{+}-E_{Q(K)}[(X_{T}-K)^{+}]: K_{\flat}\leq K\leq K_{\sharp}\},$$
set $\tilde{H}_{1}:=\overline{\mathrm{conv}(H_{1})}$ and $L_{1}:=\mathcal{C}(\tilde{H}_{1})$.}
\end{example} 

\begin{proposition}\label{calloption} Assume that $X_{T}\geq 0$ is unbounded but $P$-integrable.{}
The set $\tilde{H}_{1}$ is bounded, does not contain $0$ hence $L_{1}$ is a small cone.
\end{proposition}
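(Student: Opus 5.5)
Boundedness is the easy half. Each element of $H_{1}$ has the form $(X_{T}-K)^{+}-p_{K}$, where $p_{K}:=E_{Q(K)}[(X_{T}-K)^{+}]\in[0,D]$. Since $0\leq (X_{T}-K)^{+}\leq X_{T}$, every element of $\mathrm{conv}(H_{1})$ satisfies $-D\leq h\leq X_{T}$, so $|h|\leq X_{T}+D$. Consequently $P(|h|\geq n)\leq P(X_{T}+D\geq n)$ uniformly in $h$. These pointwise bounds survive limits in probability, because we may pass to an a.s. convergent subsequence. Hence $\tilde{H}_{1}$ is bounded, and it is also convex and closed by construction.

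To show $0\notin\tilde{H}_{1}$, I will use the unboundedness of $X_{T}$. The key pointwise estimate is the following: on the event $\{X_{T}\geq K_{\sharp}+D+1\}$, every $h=\sum_{i}\lambda_{i}\big((X_{T}-K_{i})^{+}-p_{K_{i}}\big)\in\mathrm{conv}(H_{1})$ satisfies
$$
h\geq X_{T}-K_{\sharp}-D\geq 1 .
$$
This lower bound is deterministic, uniform over $\mathrm{conv}(H_{1})$, and holds on a fixed event $A:=\{X_{T}\geq K_{\sharp}+D+1\}$. Since $X_{T}$ is unbounded, $P(A)>0$.

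Now suppose $h_{n}\in\mathrm{conv}(H_{1})$ and $h_{n}\to 0$ in probability. Then $P(A)\leq P(|h_{n}|\geq 1)\to 0$, which is a contradiction. So $0\notin\tilde{H}_{1}$. More generally, every element of $\tilde{H}_{1}$ is at least $1$ a.s. on $A$, since this lower bound is also preserved under a.s. limits along subsequences.

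With these two facts, $\tilde{H}_{1}$ is a closed, convex, bounded set not containing $0$, so $L_{1}=\mathcal{C}(\tilde{H}_{1})$ is a small cone by Definition \ref{deff}.

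The argument is short. The only delicate point is to get a lower bound that is uniform over convex combinations; this is what the condition $D<\infty$ together with $K\leq K_{\sharp}$ provides. The integrability of $X_{T}$ is not really needed, beyond ensuring that $X_{T}$ is finite, which is what boundedness in $L^{0}$ requires. Neither are the martingale properties of the measures $Q(K)$.
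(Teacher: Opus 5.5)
Your proof is correct. The boundedness half is the same as the paper's, but you obtain $0\notin\tilde H_1$ by a different and more elementary route.

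The paper introduces an auxiliary probability $P^{\natural}\ll P$ with bounded density, concentrated on $\{X_T-K_\sharp>D+1\}$. (The indicator in the paper's displayed density should be of this event, to match the normalising constant $p$.) It proves $E_{P^\natural}[h]\geq 1$ for all $h\in\mathrm{conv}(H_1)$. It then carries this bound to $L^0$-limits by the limsup Fatou lemma with the integrable majorant $X_T$, and that is the only place where $E_P[X_T]<\infty$ enters.

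You observe instead that on essentially the same event the bound $h\geq 1$ already holds pointwise, uniformly over $\mathrm{conv}(H_1)$. A pointwise bound on a fixed event of positive probability passes to limits in probability with no change of measure and no integrability. So your argument proves the proposition for every unbounded $X_T\geq 0$, as you note.

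The paper's expectation-based template does not give a stronger conclusion here, since your pointwise bound implies its expectation bound whenever the latter makes sense. What it buys is generality of method: it still works when no common event carries a uniform lower bound. An instance is the first example of Section~\ref{exaudi}, where $0\notin\overline{\mathrm{conv}(H)}$ is derived from $E[h]\geq a_\star$ together with uniform integrability.
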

\begin{proof}
Since $-D\leq h\leq X_{T}$ for all $h\in H_{1}$ this is also true for 
all $h\in\mathrm{conv}(H_{1})$ so $\tilde{H}_{1}$ is bounded. 

It is easy to construct an auxiliary probability ${P}^{\natural}\ll P$ 
with bounded $d{P}^{\natural}/dP$ such that 
$$E_{{P}^{\natural}}[(X_{T}-K_{\sharp})^{+}]\geq D+1.$$
Indeed, $p:=P(X_{T}-K_{\sharp}>D+1)>0$.{}
Define $dP^{\natural}/dP:=1_{\{X_{T}>D+1\}}/p$.

It follows that, for each $K\in [K_{\flat},
K_{\sharp}]$, 
$$
E_{{P}^{\natural}}[(X_{T}-K)^{+}]\geq E_{{P}^{\natural}}[(X_{T}-K_{\sharp})^{+}]\geq 
D+1\geq E_{Q(K)}[(X_{T}-K)^{+}]+1,
$$ 
hence $E_{{P}^{\natural}}[h]\geq 1$ for
all $h\in H$, even for all $h\in\mathrm{conv}(H)$. Now taking $h_{n}\in\mathrm{conv}(H)$, $h_{n}\to h_{*}$
in $L^{0}$, the limsup Fatou lemma implies $E_{{P}^{\natural}}[h_{*}]\geq 1$ since $E_{P}[X_{T}]<\infty$,{}
$d{P}^{\natural}/dP$ is bounded and hence $X_{T}$ is a ${P}^{\natural}$-integrable majorant of the sequence $h_{n}$.
\end{proof}

\begin{proposition}\label{callin} 
Assume that the $\mathcal{F}_{T-1}$-conditional law of $X_{T}$ has almost surely support $[0,\infty)$.{}
Then $K_{0}\cap L_{1}=\{0\}$.
\end{proposition}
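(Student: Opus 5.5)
The plan is to identify every element of $L_{1}$ as a deterministic function of $X_{T}$ with a rigid shape, and then to show that no nonzero function of that shape can be the terminal value of a martingale transform. The point is that, once we condition on $\mathcal{F}_{T-1}$, a martingale transform's terminal value is \emph{affine} in $X_{T}$.

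\textbf{Step 1 (structure of $\mathrm{conv}(H_{1})$).} Each element of $\mathrm{conv}(H_{1})$ has the form $f(X_{T})$, where
$$
f(x)=\sum_{i}\lambda_{i}\left((x-K_{i})^{+}-p_{i}\right),\qquad p_{i}:=E_{Q(K_{i})}[(X_{T}-K_{i})^{+}]\in[0,D].
$$
Such an $f$ is convex and $1$-Lipschitz. It equals the constant $-\sum_i\lambda_ip_i\in[-D,0]$ on $[0,K_{\flat}]$, and it equals $x-\sum_i\lambda_i(K_i+p_i)$ on $[K_{\sharp},\infty)$, where the intercept lies in $[K_{\flat},K_{\sharp}+D]$.

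\textbf{Step 2 (closure).} Take $f_{n}(X_{T})\to h$ in probability with $f_{n}$ as in Step 1, and pass to an a.s.\ convergent subsequence. The hypothesis on the conditional law implies that the unconditional law of $X_{T}$ has support $[0,\infty)$. The $f_{n}$ are equi-Lipschitz and uniformly bounded on compacts, so Arzel\`a--Ascoli together with a diagonal argument gives a further subsequence converging locally uniformly to some $f$. Comparing with the a.s.\ limit on the full-support set shows $h=f(X_{T})$ a.s.

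The limit $f$ inherits the rigid shape from Step 1. It is constant ($=c_{1}$) on $[0,K_{\flat}]$, and it has slope exactly $1$ on $[K_{\sharp},\infty)$ (intercepts converge along a further subsequence). Hence every element of $L_{1}$ is $\alpha f(X_{T})$ with $\alpha\ge 0$ and $f$ of this form.

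\textbf{Step 3 (affine in $X_T$).} Suppose $W_{T}(\phi)=\alpha f(X_{T})$ with $\alpha>0$. Then a.s.
$$
W_{T-1}(\phi)+\phi_{T}(X_{T}-X_{T-1})=\alpha f(X_{T}).
$$
Use a regular conditional distribution of $X_{T}$ given $\mathcal{F}_{T-1}$. For a.e.\ $\omega$, the $\mathcal{F}_{T-1}$-measurable quantities $W_{T-1},\phi_{T},X_{T-1}$ are frozen. The identity then holds for conditional-law-a.e.\ $x$, and since that law has support $[0,\infty)$, it holds on a dense subset of $[0,\infty)$. Both sides are continuous in $x$, so the identity holds for \emph{all} $x\ge 0$.

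The left side is affine in $x$ with slope $\phi_{T}(\omega)$. The right side has slope $0$ on $[0,K_{\flat}]$ and slope $\alpha$ on $[K_{\sharp},\infty)$, so $0=\phi_{T}(\omega)=\alpha$, a contradiction. Hence $\alpha=0$, and since $W_{T}(\phi)=0\cdot f(X_T)=0$, we conclude $K_{0}\cap L_{1}=\{0\}$.

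\textbf{Main obstacle.} The delicate point is Step 2: showing that the $L^{0}$-closure does not destroy the piecewise structure, i.e.\ that convergence in probability of $f_{n}(X_{T})$ upgrades to local uniform convergence of the $f_{n}$ themselves. I would handle this through the full support of the law of $X_{T}$ together with the uniform Lipschitz bound. The conditional-support argument in Step 3 is routine once regular conditional distributions are used, but care is needed to pass from ``a.s.\ in $x$'' to ``for all $x$'' via continuity.
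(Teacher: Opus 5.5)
Your proof is correct, but it takes a different route from the paper. The paper never describes the closure $\tilde{H}_{1}$. It picks $h_{*}\in K_{0}\cap\tilde{H}_{1}$ with $h_{n}\to h_{*}$ a.s. It then notes that $X_{T}-X_{0}-h_{*}\in K_{0}$ and that this variable is bounded, because $-X_{0}\le X_{T}-h_{n}-X_{0}\le K_{\sharp}+D-X_{0}$. The unbounded conditional support therefore forces the last-period position to vanish, so $X_{T}-h_{*}$ is $\mathcal{F}_{T-1}$-measurable. The contradiction comes from tracking two constants: the one to which $X_{T}-h_{n}$ reduces on $\{X_{T}\ge K_{\sharp}\}$, and the one to which $h_{n}$ reduces on $\{X_{T}\le K_{\flat}/2\}$. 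These show that $X_{T}-h_{*}$ equals $c_{*}$ with positive conditional probability and is strictly below $c_{*}$ with positive conditional probability, which is incompatible with a degenerate conditional law.

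You instead first prove a structural lemma: via the uniform Lipschitz bound and Arzel\`a--Ascoli, every element of $\tilde{H}_{1}$ is $f(X_{T})$ with $f$ flat on $[0,K_{\flat}]$ and of slope one on $[K_{\sharp},\infty)$. You then disintegrate along a regular conditional distribution, turning the a.s.\ equality into a pathwise identity in $x$, and compare slopes.

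The paper's argument is more elementary: no compactness, no disintegration, only convergence of two sequences of constants. Yours gives an explicit description of the whole closure. It also does not rely on the unboundedness of the conditional support, which the paper's boundedness trick needs; your slope comparison would work equally well if the conditional support were, say, $[0,M]$ with $M>K_{\sharp}$.

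A minor point: in Step 2 you do not need full support of the law of $X_{T}$ to conclude $h=f(X_{T})$ a.s. Locally uniform convergence already gives $f_{n_k}(X_{T})\to f(X_{T})$ pointwise.
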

\begin{proof} We argue by contradiction. If $K_{0}\cap L_{1}\neq \{0\}$ then there is $h_{*}\in K_{0}\cap\tilde{H}_{1}$.
Let $\lambda^{n}_{1},\ldots,\lambda^{n}_{k(n)}$ be convex weights and assume that
$$
h_{n}:=\sum_{j=1}^{k(n)}\lambda^{n}_{j}[(X_{T}-K^{n}_{j})^{+}-E_{Q(K_{j}^{n})}[(X_{T}-K^{n}_{j})^{+}]]\to h_{*}$$ 
almost surely as $n\to\infty$,{}
where $K^{n}_{j}\in [K_{\flat},K_{\sharp}]$, $j=1,\ldots,k(n)$. If one had $h_{*}\in K_{0}$ then
also $X_{T}-X_{0}-h_{*}\in K_{0}$. 
Notice that $$
K_{\sharp}+D-X_{0}\geq X_{T}-h_{n}-X_{0}\geq -X_{0},\ n\in\mathbb{N}
$$ 
so $X_{T}-X_{0}-h_{*}$ 
is a bounded random variable.

Since $X_{T}-X_{0}-h^{*}$ is in $K_{0}$, $X_{T}-X_{0}-h_{*}=\xi_{T-1}+\phi_{T-1}(X_{T}-X_{T-1})$ 
holds
for some $\mathcal{F}_{T-1}$-measurable $\xi_{T-1},\phi_{T-1}$. By our hypothesis on the conditional
support, necessarily $\phi_{T-1}=0$. In other words, $X_{T}-X_{0}-h_{*}$ and hence also 
$X_{T}-h_{*}$ are $\mathcal{F}_{T-1}$-measurable.

We will show that this is not possible. On the event $\{X_{T}\geq K_{\sharp}\}$,
$$
X_{T}-h_{n}=c_{n}:=\sum_{j=1}^{k(n)}\lambda^{n}_{j}[K^{n}_{j}+E_{Q(K_{j}^{n})}[(X_{T}-K^{n}_{j})^{+}]]
$$
so $c_{n}\to c_{*}$, $n\to\infty$ for some constant $c_{*}$.
It follows that $$
P(X_{T}-h_{*}=c_{*}|\mathcal{F}_{T-1})\geq P(X_{T}\geq K_{\sharp}|\mathcal{F}_{T-1})>0
$$
almost surely by our hypothesis on the conditional support of $X_{T}$. On the other hand,
on the event $\{X_{T}\leq K_{\flat}/2\}$, $$
h_{n}=-\sum_{j=1}^{k(n)}\lambda^{n}_{j}E_{Q(K_{j}^{n})}[(X_{T}-K^{n}_{j})^{+}]\to d_{*},\ n\to\infty{}
$$
for some constant $d_{*}$. Hence on this event $X_{T}-h_{*}\leq K_{\flat}/2-d_{*}$ holds. 
Clearly, $c_{*}\geq K_{\flat}-d_{*}$ hence $X_{T}-h_{*}<c_{*}$ on  
$\{X_{T}\leq K_{\flat}/2\}$. Notice that
$$P(X_{T}\leq K_{\flat}/2|\mathcal{F}_{T-1})>0$$ almost surely but this contradicts that the conditional law of 
$X_{T}-h_{*}$ with respect to $\mathcal{F}_{T-1}$ is degenerate.
\end{proof}

\begin{example}\label{poption}{\rm  
Consider now the \emph{put options} 
$\{(K-X_{T})^{+}: K_{\flat}\leq K\leq K_{\sharp}\}$.
Take a mapping $Q:[K_{\flat},K_{\sharp}]\to \mathcal{M}$ such that 
$q:=\inf_{K}E_{Q(K)}[(K-X_{T})^{+}]>0$ (that is, the prices of the put options cannot be arbitrarily small,
a rather natural assumption).

Define $$H_{2}:=\{(K-X_{T})^{+}-E_{Q(K)}[(K-X_{T})^{+}]: K_{\flat}\leq K\leq K_{\sharp}\},$$
set $\tilde{H}_{2}:=\overline{\mathrm{conv}(H)}$ and $L_{2}:=\mathcal{C}(\tilde{H}_{2})$.}
\end{example}

\begin{proposition}\label{putt} If $X_{T}\geq 0$ is unbounded then $L_{2}$ is a small cone.
\end{proposition}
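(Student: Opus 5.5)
We verify that $\tilde{H}_{2}=\overline{\mathrm{conv}(H_{2})}$ satisfies Definition \ref{deff}. It is closed and convex by construction, so what remains is boundedness in $L^{0}$ and $0\notin\tilde{H}_{2}$. Throughout we read the $H$ in the definition of $\tilde{H}_{2}$ in Example \ref{poption} as $H_{2}$.

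\emph{Boundedness.} Since $X_{T}\geq 0$, we have $0\leq (K-X_{T})^{+}\leq K\leq K_{\sharp}$. This gives $0\leq E_{Q(K)}[(K-X_{T})^{+}]\leq K_{\sharp}$, and hence $-K_{\sharp}\leq h\leq K_{\sharp}$ for every $h\in H_{2}$. These two-sided bounds survive convex combinations and almost sure limits along subsequences, so $\tilde{H}_{2}$ lies in the ball of radius $K_{\sharp}$ of $L^{\infty}$ and is in particular bounded in $L^{0}$. Unlike in Proposition \ref{calloption}, no integrability of $X_{T}$ is needed here.

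\emph{$0\notin\tilde{H}_{2}$.} Following the proof of Proposition \ref{calloption}, we look for an auxiliary probability $P^{\natural}\ll P$ under which every element of $H_{2}$ has expectation bounded away from $0$, now from above. Since $X_{T}$ is unbounded, $p:=P(X_{T}>K_{\sharp})>0$. Define $dP^{\natural}/dP:=1_{\{X_{T}>K_{\sharp}\}}/p$. On the event $\{X_{T}>K_{\sharp}\}$ all the puts are worthless, so for every $K\in[K_{\flat},K_{\sharp}]$,
$$
E_{P^{\natural}}\big[(K-X_{T})^{+}-E_{Q(K)}[(K-X_{T})^{+}]\big]=-E_{Q(K)}[(K-X_{T})^{+}]\leq -q.
$$
By linearity, $E_{P^{\natural}}[h]\leq -q$ for all $h\in\mathrm{conv}(H_{2})$.

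Now take $h_{n}\in\mathrm{conv}(H_{2})$ with $h_{n}\to h_{*}$ in probability, and pass to an almost surely convergent subsequence. The $h_{n}$ are uniformly bounded by $K_{\sharp}$, so dominated convergence under $P^{\natural}$ gives $E_{P^{\natural}}[h_{*}]\leq -q<0$. Hence $h_{*}\neq 0$, that is, $0\notin\tilde{H}_{2}$, and $L_{2}=\mathcal{C}(\tilde{H}_{2})$ is a small cone.

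The only substantive step is the choice of $P^{\natural}$. It works because on $\{X_{T}>K_{\sharp}\}$ each element of $H_{2}$ reduces to the negative constant $-E_{Q(K)}[(K-X_{T})^{+}]$, and the assumption $q>0$ then separates every element of $H_{2}$ from $0$. Unboundedness of $X_{T}$ is used only to guarantee that this event has positive probability.
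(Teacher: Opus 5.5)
Your proof is correct and follows essentially the same route as the paper. Boundedness comes from the same $L^{\infty}$ bound $|h|\leq K_{\sharp}$, and $0\notin\tilde{H}_{2}$ rests on the same observation that every element of $\mathrm{conv}(H_{2})$ is at most $-q$ on the positive-probability event where $X_{T}$ exceeds $K_{\sharp}$. The only difference is cosmetic: the paper passes to the limit pointwise on $\{X_{T}\geq K_{\sharp}\}$, while you integrate against $P^{\natural}$ and use dominated convergence.
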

\begin{proof}
As $-K_{\sharp}\leq h\leq K_{\sharp}$ for all 
$h\in H_{2}$, $\tilde{H}_{2}$ is bounded. 
Notice that $\delta:=P(X_{T}\geq K_{\sharp})$ is positive.
On the event $\{X_{T}\geq K_{\sharp}\}$, $$
(K-X_{T})^{+}-E_{Q(K)}[(K-X_{T})^{+}]\leq -E_{Q(K)}[(K-X_{T})^{+}]
\leq -q,$$ for all $K$.
Hence for any $h\in {\mathrm{conv}(H_{2})}$, $P(h\leq -q)\geq \delta$.{}
If $h_{n}\in\mathrm{conv}(H_{2})\to h_{*}$ in probability as $n\to\infty$, necessarily $h_{*}\neq 0$.
It follows that $0\notin\tilde{H}_{2}$ and  $\mathcal{C}(\tilde{H}_{2})$ is a small cone.
\end{proof}

\begin{proposition}\label{putin} 
Assume that the $\mathcal{F}_{T-1}$-conditional law of $X_{T}$ has almost surely support $[0,\infty)$.
Then $K_{0}\cap L_{2}=\{0\}$.
\end{proposition}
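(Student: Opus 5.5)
We argue by contradiction, following the proof of Proposition \ref{callin}. Since $0\notin\tilde{H}_{2}$ (Proposition \ref{putt}) and $\tilde{H}_{2}$ generates $L_{2}$, a nonzero element of $K_{0}\cap L_{2}$ has the form $\alpha h_{*}$ with $\alpha>0$ and $h_{*}\in\tilde{H}_{2}$. As $K_{0}$ is a vector space, we may divide by $\alpha$ and assume $h_{*}\in K_{0}\cap\tilde{H}_{2}$. Choose $h_{n}\in\mathrm{conv}(H_{2})$ converging to $h_{*}$ almost surely, passing to a subsequence of an in-probability approximating sequence. Write
$$
h_{n}=\sum_{j=1}^{k(n)}\lambda^{n}_{j}\left[(K^{n}_{j}-X_{T})^{+}-p^{n}_{j}\right],\qquad p^{n}_{j}:=E_{Q(K^{n}_{j})}[(K^{n}_{j}-X_{T})^{+}]\in[q,K_{\sharp}].
$$
Because $|h_{n}|\leq K_{\sharp}$, the limit $h_{*}$ is bounded. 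This is simpler than the call case, since we do not need to subtract $X_{T}$.

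Next we use the conditional support hypothesis. Since $h_{*}\in K_{0}$, we can write $h_{*}=\xi_{T-1}+\phi_{T-1}(X_{T}-X_{T-1})$ with $\xi_{T-1}=W_{T-1}(\phi)$ and $\phi_{T-1}:=\phi_{T}$ both $\mathcal{F}_{T-1}$-measurable. The conditional law of $X_{T}$ given $\mathcal{F}_{T-1}$ has unbounded support, while $h_{*}$ is bounded. Hence $\phi_{T-1}=0$ a.s., and $h_{*}$ is $\mathcal{F}_{T-1}$-measurable. We want to make this step precise. On $\{\phi_{T-1}\neq 0\}$, the affine function $x\mapsto\xi_{T-1}+\phi_{T-1}(x-X_{T-1})$ exceeds $K_{\sharp}$ in absolute value for all large $x$. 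Such $x$ carry positive conditional probability, which contradicts $|h_{*}|\leq K_{\sharp}$ a.s.

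We now derive the contradiction by showing that $h_{*}$ takes two different deterministic-looking values on events of positive conditional probability.
\begin{itemize}
\item On $\{X_{T}\geq K_{\sharp}\}$ every put is worthless, so $h_{n}=-\sum_{j}\lambda^{n}_{j}p^{n}_{j}=:-d_{n}$ is a constant in $[q,K_{\sharp}]$. Along a further subsequence $d_{n}\to d_{*}\geq q$, so $h_{*}=-d_{*}$ on this event.
\item On $\{X_{T}\leq K_{\flat}/2\}$ every put is in the money, and
$$
h_{n}=\sum_{j}\lambda^{n}_{j}K^{n}_{j}-X_{T}-d_{n}\geq K_{\flat}-X_{T}-d_{n}\geq K_{\flat}/2-d_{n}.
$$
In the limit, $h_{*}\geq K_{\flat}/2-d_{*}>-d_{*}$ there.
\end{itemize}
Both events have positive $\mathcal{F}_{T-1}$-conditional probability almost surely, by the support assumption. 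So the conditional law of $h_{*}$ given $\mathcal{F}_{T-1}$ charges both $\{-d_{*}\}$ and $(-d_{*},\infty)$, i.e.\ it is not degenerate. This contradicts the $\mathcal{F}_{T-1}$-measurability of $h_{*}$.

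The main obstacle is the justification that $\phi_{T-1}=0$, i.e.\ the rigorous use of conditional support to rule out a nonzero $\phi_{T-1}$ when $h_{*}$ is bounded. A clean way is to consider the event $\{\phi_{T-1}\neq0\}\cap\{|\xi_{T-1}|+|X_{T-1}|\leq M\}$. On it, conditionally on $\mathcal{F}_{T-1}$, the set $\{X_{T}>N\}$ has positive probability for every $N$, and $|h_{*}|$ becomes arbitrarily large there; letting $M\to\infty$ finishes the argument. The remaining steps are routine limit passages.
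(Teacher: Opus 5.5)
Your proof is correct and follows essentially the same route as the paper: boundedness of $h_{*}$ plus the conditional support assumption forces the last-period position to vanish, so $h_{*}$ is $\mathcal{F}_{T-1}$-measurable. Then $h_{*}$ equals a constant on $\{X_{T}\geq K_{\sharp}\}$ but strictly exceeds that constant on $\{X_{T}\leq K_{\flat}/2\}$, contradicting the degeneracy of its conditional law. The differences are minor: your sign convention ($h_{*}=-d_{*}$ with $d_{*}\geq q$) is the opposite of the paper's, and you justify the step $\phi_{T}=0$ more explicitly than the paper, which simply refers back to the call-option case.
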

\begin{proof}
As in the proof of Proposition \ref{callin}, suppose that 
$$
h_{n}:=\sum_{j=1}^{k(n)}\lambda^{n}_{j}[(K^{n}_{j}-X_{T})^{+}-E_{Q(K_{j}^{n})}[(K^{n}_{j}-X_{T})^{+}]]\to h_{*}.$$
Clearly, $h_{*}$ is a bounded random variable hence, as before, $h_{*}$ must be $\mathcal{F}_{T-1}$-measurable.
We will show that this cannot be the case.

Arguing on the event $\{X_{T}\geq K_{\sharp}\}$, 
$$
-\sum_{j=1}^{k(n)}\lambda^{n}_{j}E_{Q(K_{j}^{n})}[(K^{n}_{j}-X_{T})^{+}]\to d_{*},\ n\to\infty 
$$
for some constant $d_{*}$. Hence $$
P(h_{*}=d_{*}|\mathcal{F}_{T-1})\geq P(X_{T}\geq K_{\sharp}|\mathcal{F}_{T-1})>0\mbox{ a.s.}
$$
At the same time, on the event $\{X_{T}\leq K_{\flat}/2\}$, $h_{*}\geq K_{\flat}/2+d_{*}$
and $$P(X_{T}\leq K_{\flat}/2|\mathcal{F}_{T-1})>0\mbox{ a.s.}$$
This contradicts that $h_{*}$ has a degenerate conditional law with respect to $\mathcal{F}_{T-1}$.
\end{proof}

\begin{remark}\label{conil}{\rm We have thus shown in Propositions \ref{calloption}, \ref{callin}, \ref{putt} and
\ref{putin} that choosing $\mathcal{Y}$ as the convex closure of either call or put options, 
the conditions of Theorems \ref{nao} and \ref{util} are met.

Our work suggests that, in markets with both dynamical assets and static options,
a satisfactory arbitrage theory can be built up provided that the 
set of options is not too large, as described by the concept
of small cones. 

Extending Theorem \ref{util} to unbounded utilities requires more involved arguments and will be done elsewhere.
The use of small cones in continuous-time models is an exciting direction of future research that could be pursued.}
\end{remark}

\end{document}